\theoremstyle{definition}
\newtheorem{theorem}{Theorem}
\theoremstyle{remark}
\begin{document}

\title{Characterizing the Energy-Efficiency Region of \\ Symbiotic Radio Communications}

\author{
\IEEEauthorblockN{
Sihan~Wang\IEEEauthorrefmark{1},
Jingran~Xu\IEEEauthorrefmark{1},
and Yong~Zeng\IEEEauthorrefmark{1}\IEEEauthorrefmark{2}
}

\IEEEauthorblockA{\IEEEauthorrefmark{1}National Mobile Communications Research Laboratory, Southeast University, Nanjing 210096, China}

\IEEEauthorblockA{\IEEEauthorrefmark{2}Purple Mountain Laboratories, Nanjing 211111, China}

Email: \{turquoise, jingran\_xu, yong\_zeng\}@seu.edu.cn
}

\maketitle

\begin{abstract}

Symbiotic radio (SR) communication is a promising technology to achieve spectrum- and energy-efficient wireless communication, by enabling passive backscatter devices (BDs) reuse not only the spectrum, but also the power of active primary transmitters (PTs). In this paper, we aim to characterize the energy-efficiency (EE) region of multiple-input single-output (MISO) SR systems, which is defined as all the achievable EE pairs by the active PT and passive BD. To this end, we first derive the maximum individual EE of the PT and BD, respectively, and show that there exists a non-trivial trade-off between these two EEs. To characterize such a trade-off, an optimization problem is formulated to find the Pareto boundary of the EE region by optimizing the transmit beamforming and power allocation. The formulated problem is non-convex and difficult to be directly solved. An efficient algorithm based on successive convex approximation (SCA) is proposed to find a Karush-Kuhn-Tucker (KKT) solution. Simulation results are provided to show that the proposed algorithm is able to effectively characterize the EE region of SR communication systems.

\end{abstract}



\IEEEpeerreviewmaketitle

\section{Introduction}

Symbiotic radio (SR) communication has been recently proposed as a promising technology to achieve both spectrum- and energy-efficient wireless communications \cite{8907447, 8254467}. The key idea of SR is to integrate passive backscatter device (BD) with active primary transmitter (PT), where the former modulates its information by backscattering the incident signal from the PT. As such, the BD reuses not only the spectrum, but also the power of the PT\cite{8692391}. Apart from enhancing the spectrum efficiency (SE) as in the conventional cognitive radio (CR) system \cite{5783948}, SR exploits the passive backscattering technology to greatly reduce the power consumption, which is expected to improve the energy efficiency (EE) significantly\cite{9193946}. Depending on the relationships of the symbol periods for the BD and the PT, the SR communication schemes can be classified as parasitic SR (PSR) and commensal SR (CSR)\cite{8907447}. In PSR, the symbol durations of BD and PT are equal, so that the backscattering transmission of BD causes interference to the PT. By contrast, in CSR, the BD symbol duration is much longer than that of the PT, so that it may contribute additional multipath signal component to enhance the active primary transmission\cite{8907447}. SR communication is anticipated to find a wide range of applications, such as E-health, wearable devices, manufacturing, and vehicle-to-everything\cite{2021arXiv211108948B}.

Significant research efforts have been recently devoted to the study of SR communications, e.g., in terms of theoretical analysis\cite{9358202,8638762} and performance optimization\cite{8665892,9461158,9036977,9686018}. In particular, as the backscattering link suffers from double-hop signal attenuations, it is typically much weaker than the direct primary link\cite{9193946}. There are several research efforts to address this issue\cite{9686018,9481926,9345739,9154299,9538923}. For example, a SR system with massive number of BDs is studied in \cite{9686018}, so as to drastically enhance the additional multipaths created by the BDs to enable full mutualism of SR. Besides, various techniques such as the reconfigurable intelligent surface (RIS) aided\cite{9481926, 9345739} or active-load assisted SR communications are also studied to strengthen the backscattering links\cite{9154299}. In \cite{9538923}, the authors study a cell-free SR system with distributed access points (APs), where an efficient channel estimation method based on two-phase uplink training is proposed.

On the other hand, the expansion of wireless networks has led to the explosive rise of energy consumption and carbon footprint\cite{5730522}. As a result, EE, which is usually defined as the ratio of the communication rate to the power consumed and has the unit bits/Joule, has been extensively studied for wireless communication over the past decade\cite{6065681}. However, there are very limited works on EE study for SR systems. In \cite{9461158}, the authors propose a Dinkelbach-based iterative algorithm to maximize the EE of a single-input single-output (SISO) SR system. In \cite{9036977}, the EE of a multiple-input single-output (MISO) SR system with finite-block-length channel codes is studied. However, such existing EE studies of SR systems mainly focus on the so-called global EE, defined as the ratio of the weighted sum-rate of PT and BD to the total power consumption, which can hardly characterize the EEs of individual devices that have more practical relevance since each device typically has its own power supply.

To fill the above gap, in this paper, we aim to characterize the EE region of a MISO SR system, which is defined as all the achievable EE pairs by the active PT and passive BD. To this end, we first derive the maximum individual EE of the PT and BD, respectively, and reveal that there exists a nontrivial trade-off between these two EEs. To study such a trade-off, an optimization problem is formulated to optimize the transmit beamforming of the PT to find the Pareto boundary of the EE, which is the union of all EE pairs for which it is impossible to increase one without decreasing the other\cite{8316986}. The formulated problem is non-convex, and we propose an efficient successive convex approximation (SCA) based algorithm to find a Karush-Kuhn-Tucker (KKT) solution. Simulation results are provided to show that the proposed algorithm is able to effectively characterize the EE regions of SR communication systems with different cross-correlations between the primary and backscattering links.

\section{System Model}

As shown in Fig. \ref{fig_1}, we consider a MISO SR communication system, which consists of an active PT, a passive BD and a primary receiver (PR). Both the PT and BD wish to communicate with the PR. The PT has $M \geq 1$ antennas and both BD and PR are equipped with one single antenna. The PT actively transmits its information-bearing signal to the PR via multi-antenna beamforming. Meanwhile, the BD communicates with the PR via backscattering communication, by modulating its own information over the incident signal from the PT. Thus, the BD reuses not only the spectrum, but also the power of the PT to transmit its own information.
\vspace{-2.5ex}
\begin{figure}[ht]
  \centering
  \includegraphics[width = .35\textwidth]{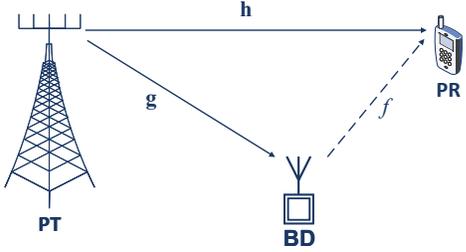}
  \vspace{-2ex}
  \caption{A MISO SR communication system for EE region characterization.}
  \label{fig_1}
  \vspace{-2ex}
\end{figure}

Denote the direct-link MISO channel from the PT to the PR as ${{\bf{h}}}={\left[{{h_{1}},...,{h_{M}}}\right]^T} \in{\mathbb{C}^{M \times 1}}$, where $h_{m}$ denotes the channel coefficient between the $m$-th antenna of the PT and the PR. Further denote the MISO channel between the PT and BD as ${{\bf{g}}}={\left[{{g_{1}},...,{g_{M}}}\right]^T}\in{\mathbb{C}^{M \times 1}}$ and that between the BD and PR as $f\in{\mathbb{C}}$. Therefore, the cascaded channel of the backscattering link is ${\bf{g}}f$.
We focus on the PSR setup\cite{8907447}, where the symbol durations of the PT and BD are equal. Let $s\left(n\right)$ and $c\left(n\right)$ denote the information-bearing symbols of the PT and BD, respectively, which follow the standard circularly-symmetric-complex-Gaussian (CSCG) distributions, i.e., $s\left(n\right) \sim \mathcal{CN}\left({0,1}\right), c\left(n\right)\sim \mathcal{CN}\left({0,1}\right)$. Further denote the transmit beamforming vector of the PT as ${\bf{w}}\in{{\mathbb C}^{M \times 1}}$, which satisfies ${\left\| {\mathbf{w}} \right\|^2} \le P_{\max}$, with $P_{\max}$ denoting the maximum allowable transmit power of the PT. Then the received signal at the PR is
\begin{small}
\begin{equation}\label{eq1}
  y\left( n \right) = {{\bf{h}}^H}{\bf{w}}s\left( n \right) + \sqrt \rho  c\left( n \right)f{{\bf{g}}^H}{\bf{w}}s\left( n \right) + z\left( n \right),
\end{equation}
\end{small}%
where $\rho\in \left[0,1\right]$ is the power reflection coefficient of the BD, $z\left(n\right)$ is the additive white Gaussian noise (AWGN) with mean zero and power ${\sigma^2}$, i.e., $z\left(n\right)\sim\mathcal{CN}\left({0,{\sigma^2}}\right)$.

Based on (\ref{eq1}), the PR first decodes the primary signal $s\left(n\right)$ by treating the BD signal $c\left(n\right)$ as noise, and then cancels the decoded primary signal before decoding the BD signal.
Therefore, the signal-to-interference-plus-noise ratio (SINR) for decoding $s\left(n\right)$ at the PR is
\begin{small}
\begin{equation}\label{eq2}
  {\gamma _s} = \frac{{{{| {{{\bf{h}}^H}{\bf{w}}} |}^2}}}{{\rho {{| f |}^2}{{| {{{\bf{g}}^H}{\bf{w}}} |}^2} + {\sigma ^2}}}.
\end{equation}
\end{small}%
Accordingly, the achievable communication rate of the PT is
\begin{small}
\begin{equation}\label{eq3}
  {R_s} = B{\log _2}\left( {1 + {\gamma _s}} \right),
\end{equation}
\end{small}%
where $B$ is the system bandwidth.
Before decoding the BD signal $c\left(n\right)$, we assume that the first term of (\ref{eq1}) is perfectly removed. Therefore, given the primary signal $s\left(n\right)$, the signal-to-noise ratio (SNR) for decoding the BD signal is
\begin{small}
\begin{equation}\label{eq4}
  {\gamma _c} {\left(s\right)}= \frac{{\rho {{| f |}^2}{{| {{{\bf{g}}^H}{\bf{w}}} |}^2}{{| {s\left( n \right)} |}^2}}}{{{\sigma ^2}}},
\end{equation}
\end{small}%
which is a random variable dependent on the primary signal $s\left(n\right)$.
By taking the expectation over $s\left(n\right)$, the average communication rate of the backscattering link is
\begin{small}
\begin{equation}\label{eq5}
  {R_c} = B{\mathbb{E}_s}\left[ {{{\log }_2}\left( {1 + {\gamma _c}{\left(s\right)}} \right)} \right].
\end{equation}
\end{small}%
With $s\left(n\right)\sim\mathcal{CN}\left( {0,1} \right)$, the squared envelope ${\left| {s\left( n \right)} \right|^2}$ follows an exponential distribution with probability density function (PDF) given by $f\left( x \right) = {e^{ - x}},x > 0$. Therefore, the average rate of the BD in (\ref{eq5}) can be expressed as\cite{8907447}
\begin{small}
\begin{equation}\label{eq6}
  {R_{c}} = B\int_0^{+\infty}{{e^{-x}}{{\log}_2}\left({1+\gamma x}\right)dx}  = - B {e^{\frac{1}{\gamma}}}{\text{Ei}} ({-\frac{1}{\gamma}}){\log_2}e,
\end{equation}
\end{small}%
where $\gamma = \frac{{\rho {{| f |}^2}{{| {{{\bf{g}}^H}{\bf{w}}} |}^2}}}{{{\sigma ^2}}}$ is the average received SNR of the backscattering link, and ${\text{Ei}}\left(x\right) \triangleq \int_{-\infty }^x {\frac{1}{t}{e^t}dt}$ is the exponential integral function.

In terms of the power consumption, the main component for the PT includes the circuit power, denoted as ${P_s}$, and the power consumed by the power amplifier, which is modelled to be proportional to the signal transmission power ${\left\| {\mathbf{w}} \right\|^2}$\cite{9036977}. Therefore, the total power consumption of the PT is ${P_s}+\mu {\left\| {\mathbf{w}} \right\|^2}$, where $\mu>1$ denotes the inefficiency of the power amplifier at the PT. Thus, the EE of the PT in bits/Joule can be expressed as
\begin{small}
\begin{equation}\label{eq7}
{EE_{PT}}
   = \frac{{B{{\log }_2}\Big( {1 + \frac{{{{| {{{\bf{h}}^H}{\bf{w}}} |}^2}}}{{\rho {{| f |}^2}{{| {{{\bf{g}}^H}{\bf{w}}} |}^2} + {\sigma ^2}}}} \Big)}}{{\mu {{{\| {\mathbf{w}} \|}^2}} + P_s}}.
\end{equation}
\end{small}%
On the other hand, since the BD does not actively transmit signal, its main power consumption is the circuit power, which is denoted as $P_c$. Thus, the EE of the BD is
\begin{small}
\begin{equation}\label{eq8}
EE_{BD} = \frac{R_c}{P_c} = - \frac{{B{{\log }_2}e}}{{P_c}}{e^{\frac{\sigma ^2}{\rho {{| f |}^2}{{| {{{\bf{g}}^H}{\bf{w}}} |}^2}}}}{\text{Ei}}\Big( { - \frac{\sigma ^2}{{\rho {{| f |}^2}{{| {{{\bf{g}}^H}{\bf{w}}} |}^2}}}} \Big).
\end{equation}
\end{small}%

\section{Maximum Individual EE of PT and BD}

It is observed from (\ref{eq7}) and (\ref{eq8}) that the EEs of the PT and BD for SR critically depend on the transmit beamforming vector $\bf{w}$. By varying $\bf{w}$, the complete EE region can be obtained. Therefore, the EE region of the considered SR system can be expressed as the union of all EE pairs $\left({EE}_{PT},{EE}_{BD} \right)$ as
\begin{small}
\begin{equation}\label{eq15}
  {\cal H }= \bigcup\limits_{{{\left\| {\bf{w}} \right\|}^2} \le {P_{\max }}} {\left\{ {\left( {EE_{PT},EE_{BD}} \right)} \right\}}.
\end{equation}
\end{small}%
Before characterizing the EE region, we first derive the maximum individual EE of the PT and BD, respectively.

For convenience, we decompose the transmit beamforming vector as ${\bf{w}}=\sqrt p {\bf{v}}$, where $p={\left\|{\bf{w}} \right\|^2}$ is the transmit power and $\bf{v}$ denotes the transmit direction with ${\left\| {\bf{v}} \right\|} = 1$. Moreover, define $\hat {\bf{h}} = \frac{{\bf{h}}}{\sigma }$ and $\hat {\bf{g}} = \frac{{\sqrt \rho  f{\bf{g}}}}{\sigma }$. Therefore, if the objective is to maximize the EE of the PT without considering that of the BD, we have the following optimization problem
\begin{small}
\begin{equation}\label{eq19}
  \begin{split}
  \mathop {\max }\limits_{{\bf{v}},p} \quad  & {EE}_{PT} = \frac{{B{{\log }_2}\Big( {1 + \frac{{p{{\bf{v}}^H}\hat{\bf{h}}{\hat{\bf{h}}^H}{\bf{v}}}}{{p {{\bf{v}}^H}{\hat{\bf{g}}}{\hat{\bf{g}}^H}{\bf{v}} + 1}}} \Big)}}{{\mu p + P_s}}\\
  {\text{s.t.}} \quad & 0 \le p \le {P_{\max }}, {\left\|{\bf{v}}\right\|}=1.
  \end{split}
\end{equation}
\end{small}%

It is not difficult to see that with any given $p$, the optimal beamforming direction ${\bf{v}}^*$ to problem (\ref{eq19}) is obtained by solving the following optimization problem
\begin{small}
\begin{equation}\label{eq9}
  \mathop {\max}\limits_{{\bf{v}}, {\left\|{\bf{v}}\right\|}=1}\quad {{\gamma _s}\left(p\right)}= \frac{{{\bf{v}}^H}{\hat{\bf{h}}}{\hat{\bf{h}}^H}{\bf{v}}} {{{{\bf{v}}^H}\Big({\hat{\bf{g}}}{\hat{\bf{g}}^H}+ {\frac{1}{p}}{{\bf{I}}_M} \Big)}{\bf{v}}}
\end{equation}
\end{small}%
Problem (\ref{eq9}) is the standard Rayleigh quotient problem, whose optimal solution is given by
\begin{small}
\begin{equation}\label{eq20}
{{\bf{v}}^*} \left(p\right) = \frac{{{{\Big({\hat{\bf{g}}}{\hat{\bf{g}}^H} + \frac{1}{p}{{\bf{I}}_M} \Big)}^{ - 1}}{\hat{\bf{h}}}}}{{\| {{{({\hat{\bf{g}}}{\hat{\bf{g}}^H} + \frac{1}{p}{{\bf{I}}_M} )}^{ - 1}}{\hat{\bf{h}}}} \|}},
\end{equation}
\end{small}%
and the corresponding maximum value of (\ref{eq9}) is
\begin{small}
\begin{equation}\label{eq39}
  {\gamma_s}^* \left(p\right) = {\hat{\bf{h}}^H}{{\Big( \hat{{\bf{g}}}{\hat{\bf{g}}^H} + \frac{1}{p}{{\bf{I}}_M} \Big)}^{ - 1}}{\hat{\bf{h}}}.
\end{equation}
\end{small}%
By substituting the SINR (\ref{eq39}) into (\ref{eq19}), the optimization problem reduces to finding the optimal transmit power $p^*$ as:
\begin{small}
\begin{equation}\label{eq41}
  \begin{split}
  \mathop {\max }\limits_p \quad & {EE}_{PT}\left(p\right) = \frac{{B{{\log }_2}\Big( {1 + {{\hat {\bf{h}}}^H}{{( {\hat {\bf{g}}{{\hat {\bf{g}}}^H} + \frac{1}{p}{{\bf{I}}_M}} )}^{ - 1}}\hat {\bf{h}}} \Big)}}{{\mu p + {P_s}}}\\
  {\text{s.t.}} \quad & 0 \le p \le {P_{\max }}.
  \end{split}
\end{equation}
\end{small}%
Note that for the special case when ${\hat {\bf{g}}}=0$, problem (\ref{eq41}) reduces to the EE maximization problem of the conventional point-to-point channel, whose optimal solution is obtained in \cite{5783982,20152400928493}. For the more general problem (\ref{eq41}), the optimal solution is obtained in the following Theorem.

\begin{theorem}
   The optimal solution $p^*$ to problem (\ref{eq41}) is ${p^*} = \min \left\{ {{p_0},{P_{\max }}} \right\}$, where $p_0$ is the unique and positive root of the following equation with respect to $p$:
   \begin{small}
   \begin{equation}\label{eq34}
     {\frac{{\left( {\mu p + P_s} \right)f'\left( p \right)}}{1+{f\left(p\right)}} - \mu \ln \left(1+ {f\left( p\right)} \right)} = 0,
   \end{equation}
   \end{small}%
   where
   \begin{small}
   \begin{equation}\label{eq35}
     f\left( p \right) = \frac{{{{\| {\hat {\bf{h}}} \|}^2}p + \left( {{{\| {\hat {\bf{h}}} \|}^2}{{\| {\hat {\bf{g}}} \|}^2} - {{| {{{\hat {\bf{g}}}^H}\hat {\bf{h}}} |}^2}} \right){p^2}}}{{1 + {{\| {\hat {\bf{g}}} \|}^2}p}}.
   \end{equation}
   \end{small}%
\end{theorem}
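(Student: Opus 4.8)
The strategy is to collapse problem~(\ref{eq41}) to a one‑dimensional problem whose objective is unimodal, locate its unique stationary point, and then fold in the box constraint $0\le p\le P_{\max}$.

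\emph{Step 1 (reduction).} First I would apply the Sherman--Morrison identity to $(\hat{\bf g}\hat{\bf g}^H+\tfrac1p{\bf I}_M)^{-1}$, which after a routine calculation gives $\hat{\bf h}^H(\hat{\bf g}\hat{\bf g}^H+\tfrac1p{\bf I}_M)^{-1}\hat{\bf h}=f(p)$ with $f$ exactly as in~(\ref{eq35}). Hence the objective of~(\ref{eq41}) becomes the scalar function $EE_{PT}(p)=\frac{B}{\ln 2}\cdot\frac{\psi(p)}{\mu p+P_s}$, where $\psi(p):=\ln\!\big(1+f(p)\big)$, and $\psi(0)=0$.

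\emph{Step 2 (the key structural fact).} I would show that $\psi$ is strictly increasing and strictly concave on $p>0$. Writing $a=\|\hat{\bf h}\|^2$, $c=\|\hat{\bf g}\|^2$ and $b=ac-|\hat{\bf g}^H\hat{\bf h}|^2\ge 0$ (Cauchy--Schwarz), a partial‑fraction decomposition yields, for $c>0$, $f(p)=\tfrac{b}{c}\,p+\tfrac{|\hat{\bf g}^H\hat{\bf h}|^2}{c}\cdot\tfrac{p}{1+cp}$ (and $f(p)=a p$ when $c=0$), which displays $f$ as a nonnegatively‑sloped affine term plus a nonnegative multiple of the increasing concave map $p\mapsto p/(1+cp)$; thus $f''\le 0$, and $f'(p)=\frac{b c p^2+2bp+a}{(1+cp)^2}\ge a>0$ for all $p\ge 0$ since ${\bf h}\neq{\bf 0}$. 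Consequently $\psi'(p)=\frac{f'(p)}{1+f(p)}>0$ and $\psi''(p)=\frac{f''(p)(1+f(p))-(f'(p))^2}{(1+f(p))^2}<0$, the latter because $f''\le 0$ and $(f')^2>0$.

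\emph{Step 3 (stationary point and conclusion).} Differentiating gives $EE_{PT}'(p)=\frac{B}{\ln 2}\cdot\frac{\phi(p)}{(\mu p+P_s)^2}$ with $\phi(p)=(\mu p+P_s)\psi'(p)-\mu\psi(p)$, which is precisely the left‑hand side of~(\ref{eq34}) once $\psi'$ and $\psi$ are written out; hence the stationary points of $EE_{PT}$ are exactly the roots of~(\ref{eq34}) and $\mathrm{sign}\,EE_{PT}'(p)=\mathrm{sign}\,\phi(p)$. Since $\phi'(p)=(\mu p+P_s)\psi''(p)<0$, $\phi$ is strictly decreasing on $(0,\infty)$; moreover $\phi(0)=P_s\psi'(0)=P_s a>0$, while a short asymptotic analysis (the relevant cases being whether or not $f$ is bounded as $p\to\infty$) shows $\lim_{p\to\infty}\phi(p)<0$. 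Therefore $\phi$ has a unique positive zero $p_0$, $EE_{PT}$ strictly increases on $(0,p_0)$ and strictly decreases on $(p_0,\infty)$, so its unique maximizer over $[0,P_{\max}]$ is $p^*=\min\{p_0,P_{\max}\}$.

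\emph{Main obstacle.} The crux is Step~2 — proving that $p\mapsto\ln(1+f(p))$ is strictly concave; once $f$ is shown to be concave with strictly positive derivative, the first‑order condition, the monotonicity of $\phi$, and the existence, uniqueness and positivity of $p_0$ all follow with little effort. One should also dispatch the degenerate case ${\bf h}={\bf 0}$, where $EE_{PT}\equiv 0$ and the statement is vacuous, so the theorem is understood under ${\bf h}\neq{\bf 0}$.
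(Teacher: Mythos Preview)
Your proposal is correct and follows essentially the same route as the paper: apply the rank-one inverse identity to rewrite the SINR as $f(p)$, verify that $f$ is increasing and concave so that the sign-determining function $\phi$ (the paper's $h$) is strictly decreasing with $\phi(0)>0>\lim_{p\to\infty}\phi(p)$, yielding a unique root $p_0$ and hence $p^*=\min\{p_0,P_{\max}\}$. One small slip: the inequality $f'(p)\ge a$ is false in general (e.g.\ when $b=0$ and $c>0$ one has $f'(p)\to 0$), but all you actually need---and what your numerator formula $bcp^2+2bp+a>0$ immediately gives---is $f'(p)>0$.
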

\begin{proof}
   Please refer to the appendix.
\end{proof}

Note that equation (\ref{eq34}) can be solved numerically or via efficient bisection method. Therefore, the maximum $EE_{PT}$, denoted by $\eta_{PT}^{\left(1\right)}$, is obtained as
\begin{small}
\begin{equation}\label{eq21}
  {\eta _{PT}^{\left(1\right)}} = \frac{{B{{\log }_2}\Big( {1 + {{\hat {\bf{h}}}^H}{{( {\hat {\bf{g}}{{\hat {\bf{g}}}^H} + \frac{1}{p^*}{{\bf{I}}_M}} )}^{ - 1}}\hat {\bf{h}}} \Big)}}{{\mu {p^*} + P_s}}.
\end{equation}
\end{small}%
Furthermore, by substituting the solution to (\ref{eq8}), the resulting EE of the BD is
\begin{small}
\begin{equation}\label{eq22}
  {\eta _{BD}^{\left(1\right)}} =  - \frac{{B{{\log }_2}e}}{{{P_c}}}\exp ( {\frac{1}{{\gamma _c^*}}} ){\text{Ei}}( { - \frac{1}{{\gamma _c^*}}} ),
\end{equation}
\end{small}%
where $\gamma _c^* = \frac{{{{| {{{\hat {\bf{g}}}^H}{{( {\hat {\bf{g}}{{\hat {\bf{g}}}^H} + \frac{1}{{{p^*}}}{{\bf{I}}_M}} )}^{ - 1}}\hat {\bf{h}}} |}^2}}}{{{{\| {{{( {\hat {\bf{g}}{{\hat {\bf{g}}}^H} + \frac{1}{{{p^*}}}{{\bf{I}}_M}} )}^{ - 1}}\hat {\bf{h}}} \|}^2}}}{p^*}$.

Next, we consider the case when the objective is to maximize the EE of the BD, without considering that of the PT. Based on (\ref{eq8}), the problem can be formulated as
\begin{small}
\begin{equation}\label{eq36}
  \begin{aligned}
  \mathop {\max}\limits_{{\bf{v}},p}\quad & {EE}_{BD} =  - \frac{{B{{\log }_2}e}}{{P_c}}{e^{\frac{1}{{{{| {{{\hat {\bf{g}}}^H}{\bf{v}}} |}^2}p}}}}{\text{Ei}}\Big( { - \frac{1}{{{{| {{{\hat {\bf{g}}}^H}{\bf{v}}} |}^2}p}}} \Big)\\
  {\text{s.t.}}\quad  &  0 \le p \le {P_{\max }}, {\left\|{\bf{v}}\right\|}=1.
  \end{aligned}
\end{equation}
\end{small}%

To find the optimal solution to (\ref{eq36}), we use the fact that the function $-{e^{\frac{1}{\gamma }}}{\text{Ei}}( { - \frac{1}{\gamma }} )$ is monotonically increasing with respect to $\gamma \ge 0$ \cite{8907447}. As a result, it is equivalent to maximize $|\hat {\bf{g}}^H {\bf{v}}|^2 p$, whose optimal solution is $p^*=P_{\max}$ and $\bf{v}^*=\frac{\hat{\bf{g}}}{\left\|\hat{\bf{g}} \right\|}$. Therefore, the maximum achievable EE of the BD is
\begin{small}
\begin{equation}\label{eq14}
\eta_{BD}^{\left(2\right)}=
- \frac{{B{{\log }_2}e}}{{P_c}}{e^{\frac{1}{{\| \hat{\bf{g}} \|^2}{P_{\max}}}}}{\text{Ei}}\Big( { - \frac{1}{{\| \hat {\bf{g}} \|^2}{P_{\max}}}} \Big),
\end{equation}
\end{small}%
and the resulting EE of the PT is
\begin{small}
\begin{equation}\label{eq40}
\eta_{PT}^{\left(2\right)} = \frac{{B{{\log }_2}\Big( {1 + \frac{{{P_{\max }}{{| {{{\hat {\bf{h}}}^H}\hat {\bf{g}}} |}^2}}}{{{P_{\max }}{{\| {\hat {\bf{g}}} \|}^4} + {{\| {\hat {\bf{g}}} \|}^2}}}} \Big)}}{{\mu {P_{\max }} + {P_s}}}.
\end{equation}
\end{small}%

Based on the above results, it is found that maximizing ${EE}_{PT}$ and ${EE}_{BD}$ lead to completely different solutions. Specifically, to maximize the EE of the BD, the PT should use its maximum power and direct the signal towards the BD via maximal ratio transmission (MRT) over the PT-BD channel ${\hat {\bf{g}}}$. By contrast, to maximize the EE of the PT, the minimum mean square error (MMSE) transmit beamforming should be used so as to balance the effect of maximizing its desired signal via channel ${\hat{\bf{h}}}$, as well minimizing the interference power via channel $\hat{\bf{g}}$. Besides, the PT should not use the maximum power in general. Such results demonstrate that there exists a non-trivial trade-off between maximizing the EE of the PT and that of the BD. To optimally study such a trade-off, we will characterize the EE region of the considered SR system as defined in (\ref{eq15}).

\section{EE Region Characterization}

Of particular interest of the EE region in (\ref{eq15}) is its outer boundary, also known as the Pareto boundary, which is defined as the union of all EE pairs $\left( {EE_{PT},EE_{BD}} \right)$ for which it is impossible to increase one without decreasing the other\cite{8316986}.
By following similar technique as characterizing the Pareto boundary of rate region in multi-user communication systems\cite{6484995}, any EE pair on the Pareto boundary of the EE region ${\cal H }$ can be obtained via solving the following optimization problem with a given EE profile ${\boldsymbol{\alpha}} = \left( {\alpha ,1 - \alpha } \right) $:
\begin{small}
\begin{equation}\label{eq16}
\begin{split}
  \mathop {\max }\limits_{{\eta},{\mathbf{w}}} \quad & \quad \eta \\
  {\text{s.t.}} \;
  {\text{C1:}}&\;  \frac{{B{{\log }_2}\Big( {1 + \frac{{{{| {{{\hat {\bf{h}}}^H}{\bf{w}}}|}^2}}}{{{{| {{{\hat {\bf{g}}}^H}{\bf{w}}} |}^2} + 1}}} \Big)}}{{\mu {{\| {\bf{w}} \|}^2} + P_s}} \ge \alpha \eta, \\
  {\text{C2:}}&\;   - \frac{{B{{\log}_2}e}}{{P_c}}{e^{\frac{1}{{{{| {{{\hat {\bf{g}}}^H}{\bf{w}}} |}^2}}}}}{\text{Ei}}\Big({-\frac{1} {{{{| {{{\hat {\bf{g}}}^H}{\bf{w}}} |}^2}}}} \Big) \ge \left( {1 - \alpha } \right)\eta, \\
  {\text{C3:}}&\; {\left\| {\mathbf{w}} \right\|^2}\le{P_{\max }},
\end{split}
\end{equation}
\end{small}%
where $0 < \alpha < 1$ is the target ratio between $EE_{PT}$ and $ \eta $. By varying $\alpha$ between $0$ and $1$, the complete Pareto boundary of the EE region ${\cal H }$ can be characterized. For a given ${\boldsymbol{\alpha}}$, we denote ${\eta}^*$ as the optimal value of problem (\ref{eq16}). Then ${{\eta}^*}{\boldsymbol{\alpha}}$ is a Pareto optimal EE pair corresponding to the intersection between a ray in the direction of ${\boldsymbol{\alpha}}$ and the Pareto boundary of the EE region.

Problem (\ref{eq16}) is non-convex, which cannot be directly solved. In this paper, we propose an efficient algorithm for problem (\ref{eq16}) by utilizing bisection method together with SCA technique. To this end, for any fixed value $\eta $, we consider the following feasibility problem:
\begin{small}
\begin{equation}\label{eq17}
\begin{split}
  {\text{Find}} \quad \quad  & {\mathbf{w}} \\
  {\text{s.t.}}\,
  {\text{C1-1:}}& \,  B {{\log }_2}\Big( {1 + \frac{{{{| {{{\hat {\bf{h}}}^H}{\bf{w}}} |}^2}}}{{{{| {{{\hat {\bf{g}}}^H}{\bf{w}}} |}^2} + 1}}} \Big) \ge \alpha \eta ( {\mu {{\| {\mathbf{w}} \|}^2} + P_s}), \\
  {\text{C2-1:}}& \,  - \frac{{B{{\log }_2}e}}{{P_c}}{e^{\frac{1}{{{{| {{{\hat {\bf{g}}}^H}{\bf{w}}} |}^2}}}}}{\text{Ei}}\Big( { - \frac{1}{{{{| {{{\hat {\bf{g}}}^H}{\bf{w}}} |}^2}}}} \Big) \ge ( {1 - \alpha } )\eta,  \\
  {\text{C3:}}\;\; &\, {\left\| {\mathbf{w}} \right\|^2} \le {P_{\max }}. \\
\end{split}
\end{equation}
\end{small}%
If $\eta $ is feasible to problem (\ref{eq17}), then the optimal value of problem (\ref{eq16}) satisfies ${\eta}^* \ge \eta $; otherwise, ${\eta}^* < \eta $. Thus, by solving problem (\ref{eq17}) with different $\eta $ and applying the efficient bisection method, problem (\ref{eq16}) can be solved.

However, the feasibility problem (\ref{eq17}) is still non-convex since the constraints C1-1 and C2-1 are non-convex. To tackle such an issue, it is noted that problem (\ref{eq17}) is feasible if and only if $\eta_{BD}^* \ge \left( {1 - \alpha } \right)\eta$, where $\eta_{BD}^*$ is the optimal value of the following optimization problem
\begin{small}
\begin{equation}\label{eq37}
\begin{split}
  \mathop {\max }\limits_{\bf{w}} \; & \quad - \frac{{B{{\log }_2}e}}{{P_c}}{e^{\frac{1}{{{{| {{{\hat {\bf{g}}}^H}{\bf{w}}} |}^2}}}}}{\text{Ei}}\Big( { - \frac{1}{{{{| {{{\hat {\bf{g}}}^H}{\bf{w}}} |}^2}}}} \Big)\\
  {\text{s.t.}}\quad &
  {\text{C1-1}}, {\text{and C3}}.
\end{split}
\end{equation}
\end{small}%

Since $-{e^{\frac{1}{\gamma }}}{\text{Ei}}( { - \frac{1}{\gamma }} )$ is a monotonically increasing function with respect to $\gamma \ge 0$ \cite{8907447}, problem (\ref{eq37}) can be equivalently transformed into
\begin{small}
\begin{equation}\label{eq38}
\begin{split}
  \mathop {\max }\limits_{\bf{w}}  \; & \quad {| {{{\hat {\bf{g}}}^H}{\bf{w}}} |^2} \\
  {\text{s.t.}}\quad &
  {\text{C1-1}}, {\text{and C3}}.
\end{split}
\end{equation}
\end{small}%
As a result, problem (\ref{eq17}) is feasible if and only if $ \gamma^* \ge \gamma \left( \eta  \right) $, where $\gamma^*$ is the optimal value of problem (\ref{eq38}) and $\gamma \left( \eta \right)$ is the root of the following equation:
\begin{small}
\begin{equation}\label{eq33}
    ( {1 - \alpha } )\eta  + \frac{{B{{\log }_2}e}}{{P_c}}{e^{\frac{1}{\gamma }}}{\text{Ei}}( { - \frac{1}{\gamma }} ) = 0,
\end{equation}
\end{small}%
which can be efficiently obtained numerically for any given $\eta$.

Therefore, the remaining problem is to solve the optimization problem (\ref{eq38}), which is non-convex. Fortunately, the efficient SCA technique\cite{8918497,SCA} can be used for (\ref{eq38}).
To this end, by defining ${{\bf{H}}}={\hat{\bf{h}}}{\hat{\bf{h}}^H}$, ${{\bf{G}}}={\hat{\bf{g}}}{\hat{\bf{g}}^H}$, problem (\ref{eq38}) can be transformed into:
\begin{small}
\begin{equation}\label{eq18}
  \begin{aligned}
  \mathop {\max }\limits_{\bf{w}} \quad \; \quad  & {{\bf{w}}^H}{\bf{Gw}} \\
  {\text{s.t.}}\;
  {\text{C1-2}}:& \;\alpha \eta ( {\mu {{\left\| {\mathbf{w}} \right\|}^2} + P_s} ) +B{\log _2}( {{{\mathbf{w}}^H}{\mathbf{Gw}} + 1} )    \\
  & -B{\log _2}( {{{\mathbf{w}}^H}{\mathbf{Hw}} + {{\mathbf{w}}^H}{\mathbf{Gw}} + 1}) \le 0, \\
  {\text{C3-1}}: &  \; {\left\| {\mathbf{w}} \right\|^2} -{P_{\max }} \le  0.
\end{aligned}
\end{equation}
\end{small}%
Furthermore, by introducing a slack variable $S$, problem (\ref{eq18}) can be equivalently written as
\begin{small}
\begin{equation}\label{eq30}
 \begin{aligned}
   \mathop {\max}\limits_{{\bf{w}},S} \quad \; \quad & {{\bf{w}}^H}{\bf{Gw}} \\
  {\text{s.t.}}\;
  {\text{C1-3}}:
  & \; \alpha \eta ( {\mu {{\left\| {\bf{w}} \right\|}^2} + P_s} ) + B{\log _2}( {S + 1} )\\
  & - B{\log_2}({{{\bf{w}}^H}{\bf{Hw}}+{{\bf{w}}^H}{\bf{Gw}}+1}) \le 0,\\
  {\text{C3-1}}:
  & \; {\left\| {\bf{w}} \right\|^2} - {P_{\max }} \le 0,\\
  {\text{C4}}:\;
  & \; {{\bf{w}}^H}{\bf{Gw}} - S \le 0.
 \end{aligned}
\end{equation}
\end{small}%
It is not difficult to see that there always exists an optimal solution to (\ref{eq30}) so that the constraint ${\text{C4}}$ is satisfied with equality. To show this, suppose that $\left( {{{\bf{w}}_1},{S_1}} \right)$ is an optimal solution to (\ref{eq30}) for which C4 is satisfied with strict inequality, i.e., ${\bf{w}}_1^H{\bf{Gw}}_1 - {S_1} < 0$. Then we can further reduce $S_1$ until C4 is satisfied with equality, which is denoted as ${S_2} = {\bf{w}}_1^H{\bf{Gw}}_1$. It is obvious that the solution $\left( {{{\bf{w}}_1},{S_2}} \right)$ satisfies all constraints in (\ref{eq30}), and achieves the same objective value as the assumed optimal solution $\left( {{{\bf{w}}_1},{S_1}} \right)$. Therefore, it is also an optimal solution to (\ref{eq30}). This shows the equivalence between problem (\ref{eq30}) and (\ref{eq18}).

Problem (\ref{eq30}) can be handled by the SCA technique, which transforms the non-convex optimization problem into a series of convex optimization problems, with guaranteed convergence to a KKT solution under some mild conditions.
To that end, by using the fact that any convex differentiable function is globally lower-bounded by its first-order Taylor expansion, we have the following lower bound at a given local point ${{\mathbf{w}}^{(i)}}$
\begin{small}
\begin{equation}\label{eq28}
 \begin{aligned}
  {{{\bf{w}}^H}{\bf{Gw}}}
  & \ge {{{\bf{w}}^{(i)}}^H{\bf{G}}{{\bf{w}}^{(i)}} + 2{\mathop{\text {Re}}\nolimits} \{ {{{\bf{w}}^{(i)}}^H{\bf{G}}( {{\bf{w}} - {{\bf{w}}^{(i)}}} )} \}}  \\
  & \buildrel \Delta \over = {\phi_{lb}}( {\bf{w}}| {\bf{w}}^{(i)}).
  \end{aligned}
\end{equation}
\end{small}%
Similarly,
\begin{small}
\begin{equation}\label{eq29}
  \begin{aligned}
 {{\bf{w}}^H}{\bf{Hw}}
  & \ge {{\bf{w}}^{(i)}}^H{\bf{H}}{{\bf{w}}^{(i)}} + 2{\mathop{\text {Re}}\nolimits} \{ {{{\bf{w}}^{(i)}}^H{\bf{H}}( {{\bf{w}} - {{\bf{w}}^{(i)}}} )} \} \\
  & \buildrel \Delta \over = {\chi _{lb}}( {\bf{w}}| {\bf{w}}^{(i)}).
  \end{aligned}
\end{equation}
\end{small}%
Furthermore, we have the global upper bound for the concave function ${\log _2}\left( {S + 1} \right)$ at a given local point $S^{\left(i \right)}$
\begin{small}
\begin{equation}\label{eq31}
  \begin{aligned}
  {\log _2}( {S + 1} )
  & \le {\log _2}( {{S^{(i)}} + 1} ) + \frac{{{{\log }_2}e}}{{{S^{(i)}} + 1}}({S-{S^{(i)}}}) \\
  & \buildrel \Delta \over = {\zeta _{ub}}(S | S ^ {(i)}).
  \end{aligned}
\end{equation}
\end{small}%

Therefore, at any given local point ${{\mathbf{w}}^{(i)}}$ and $S^{\left(i \right)}$, by applying the global lower/upper bounds (\ref{eq28}) - (\ref{eq31}) to problem (\ref{eq30}), we have the following optimization problem
\begin{small}
\begin{equation}\label{eq32}
  \begin{aligned}
  \mathop {\max }\limits_{{\bf{w}},S} \quad \; \quad &  {{{\bf{w}}^{(i)}}^H{\bf{G}}{{\bf{w}}^{(i)}} + 2{\mathop{\text {Re}}\nolimits} \{ {{{\bf{w}}^{(i)}}^H{\bf{G}}( {{\bf{w}} - {{\bf{w}}^{(i)}}} )}\}}\\
  {\text{s.t.}}\,
  {\text{C1-4}}:\, & \alpha \eta ( {\mu {{\left\| {\bf{w}} \right\|}^2} + P_s}) + B{\zeta_{ub}}(S| {S}^{(i)}) -  \\
  & B {\log_2} (\phi_{lb} ({\bf{w}}| {\bf{w}}^{(i)} ) +{{\chi_ {lb}( {\bf{w}}| {\bf{w}}^{(i)})}+1}) \le 0,\\
  {\text{C3-1}}:\, &  {\left\| {\mathbf{w}} \right\|^2} -{P_{\max }} \le  0,\\
  {\text{C4}}:\;\,  &  {{\bf{w}}^H}{\bf{Gw}} - S \le 0.
  \end{aligned}
\end{equation}
\end{small}%

Problem (\ref{eq32}) is a convex optimization problem, which can be efficiently solved by standard convex optimization techniques or existing software tools such as CVX \cite{cvx}. Thanks to the global lower/upper bounds relationships in (\ref{eq28}) - (\ref{eq31}), the optimal value of problem (\ref{eq32}) gives at least a lower bound to that of problem (\ref{eq30}).
Therefore, an efficient KKT solution of the original non-convex problem (\ref{eq30}) can be obtained by iteratively solving the convex optimization problem (\ref{eq32}) with the local point $\left({{\bf{w}}^{(i)}}, {{S}^{(i)}}\right)$ updated in each iteration, which is summarized in Algorithm \ref{alg1}.
Note that the algorithm is guaranteed to converge since it can be shown that at each iteration, the objective values of (\ref{eq30}) and (\ref{eq32}) are monotonically non-decreasing, as verified in Section V. As such, the original problem (\ref{eq16}) can be solved based on the standard bisection method, together with Algorithm \ref{alg1}.
\addtolength{\topmargin}{0.01in}
\begin{algorithm}[t]
  \renewcommand{\algorithmicrequire}{\textbf{Input}:}
  \renewcommand{\algorithmicensure}{\textbf{Output}:}
  \renewcommand{\algorithmicif}{\quad \textbf{If}}
  \renewcommand{\algorithmicendif}{\quad \textbf{End if}}
  \renewcommand{\algorithmicelse}{\quad \textbf{Else}}
  \renewcommand{\algorithmicreturn}{\textbf{Return}}
  \caption{SCA Algorithm for Solving Problem (\ref{eq30})}
  \label{alg1}
  \begin{algorithmic}[1]
    \STATE \textbf{Initialization}: ${{\bf{w}}^{\left(0\right)}}$, ${S^{\left(0\right)}}$, $i=0$;
    \STATE \textbf{Repeat}
      \STATE \quad With given ${{\bf{w}}^{(i)}}$ and ${{S}^{(i)}}$, solve the convex optimization problem (\ref{eq32}), and denote the optimal solution as ${{\bf{w}}^{*(i)}}$ and ${S^{*(i)}}$;
      \STATE \quad Update ${{\bf{w}}^{\left(i+1\right)}}={{\bf{w}}^{*(i)}}$ and ${{S}^{\left({i+1}\right)}}={S^{*(i)}}$;
      \STATE \quad Update $i=i+1$;
    \STATE \textbf{Until} The fractional increase of the objective value of problem (\ref{eq32}) is below a certain threshold $\kappa $ ;
    \RETURN ${{\bf{w}}^*}={{\bf{w}}^{(i)}}$ and ${S^*}={{S}^{(i)}}$.
  \end{algorithmic}
\end{algorithm}

\section{Simulation Results}

In this section, simulation results are presented to demonstrate the effectiveness of our proposed algorithm. We assume that the PT is equipped with an uniform linear array (ULA) of $M=4$ elements, with the adjacent element separated by half wavelength.
Both the BD and PR have equal distance with the PT, which is $d_0 = 300{\text{m}}$, and the angle formed by PT-PR and PT-BD line segments is $\theta$. As such, without loss of generality, the coordinate of the PT, PR and BD locations can be represented as $\left(0,0\right)$, $\left(d_0,0\right)$ and $\left(d_0\cos{\theta},d_0\sin{\theta}\right) $, respectively. Therefore, the distance between BD and PR can be represented as ${d_1} = 2{d_0}\sin \left( {\frac{{{\theta }}}{2}} \right)$. All links are assumed to be the Rician fading channels, with the Rician K-factor $K=10\text{dB}$. The angle of departure (AoD) of the LoS component of the PT-PR channel $\bf{h}$ is $0$, and that of the PT-BD channel $\bf{g}$ is thus ${\theta}$.
Moreover, the large-scale path loss is modeled as $\beta  = {\beta _0}{d^{ - \alpha }}$, where ${\beta _0} = {\left( {\frac{\lambda }{{4\pi }}} \right)^2}$ is the reference channel gain, $\lambda$ is the wavelength, $d$ is the distance between the corresponding devices and $\alpha$ denotes the path loss exponent. We set the path-loss exponents of the PT-PR link, PT-BD link and BD-PR link as ${\alpha_{TR}}={\alpha_{TD}} =2.7$, ${\alpha_{DR}}=2.1$, respectively.
The carrier frequency is $3.5{\text{GHz}}$, the channel bandwidth $B$ is $10 \text{MHz}$ and the noise power is ${\sigma ^2} = -110 \text{dBm}$. Furthermore, the reflection coefficient is $\rho = 1$ and the inefficiency of the power amplifier of the PT is $\mu = 2.85$. The maximum transmit power is $P_{\max}=0.1 \text{W}$, and the circuit power of the BD is $P_c=0.2 \text{mW}$.

Fig. \ref{fig_7} plots the convergence of the proposed SCA Algorithm \ref{alg1} with randomly generated initial points, where the termination threshold is $\kappa = 0.1\%$. Note that two curves are plotted in Fig. \ref{fig_7}: the ``Accurate EE" corresponds to the exact EE of the BD by substituting the obtained beamforming vector $\bf{w}$ in each iteration to the EE model in (\ref{eq8}), while the ``Lower bound with Taylor approximation" corresponds to the EE of the BD calculated based on the optimal objective value of problem (\ref{eq32}). It is observed from Fig. \ref{fig_7} that with the proposed SCA-based algorithm, the EE of the BD and its lower bound increase monotonically and only a few iterations are needed for convergence.
\vspace{-2ex}
\begin{figure}[h]
  \centering
  \includegraphics[width=0.45\textwidth]{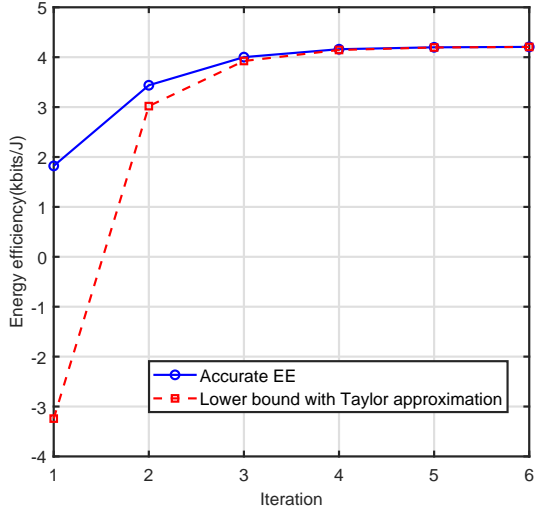}
  \vspace{-2ex}
  \caption{Convergence of the proposed SCA Algorithm \ref{alg1}.}
  \label{fig_7}
  \vspace{-2ex}
\end{figure}

\begin{figure}[h]
  \centering
  \includegraphics[width = 0.45\textwidth]{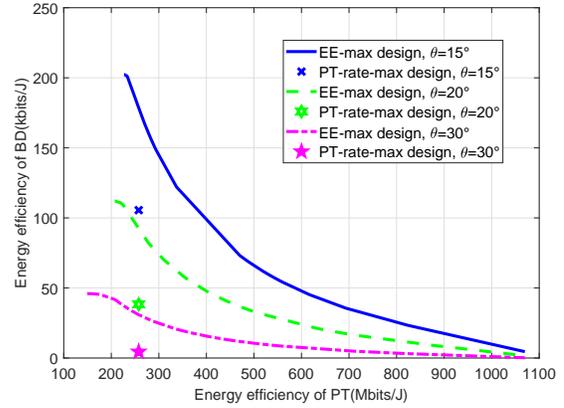}
  \vspace{-2ex}
  \caption{EE region of MISO SR with different $\theta$. The markers label the resulting EE pair with the beamforming designed to maximize the rate of the PT.}
  \vspace{-2ex}
  \label{fig_5}
\end{figure}

Fig. \ref{fig_5} plots the outer boundaries of the EE regions obtained with our proposed algorithm for different $\theta$, which are labelled as the ``EE-max design". As a benchmark comparison, Fig. \ref{fig_5} also shows the resulting EE pair for the so-called ``PT-rate-max design", for which the transmit beamforming is designed to maximize the communication rate of the PT, rather than the EE. Note that it is not difficult to see that the ``BD-rate-max design" is equivalent to maximizing the individual EE of the BD as considered in Section III. The circuit power consumption for the PT is $P_s = 20 \text{mW}$. It is firstly observed from Fig. \ref{fig_5} that there exists a non-trivial EE trade-off between PT and BD, i.e., a sacrifice of the EE for PT would lead to considerable improvement to that of the BD, and vice versa. It is also observed from Fig. \ref{fig_5} that as $\theta$ increases, the achievable EE region shrinks. This is expected since a larger $\theta$ value implies that the PT-PR channel $\bf{h}$ and PT-BD channel $\bf{g}$ are less correlated, which makes it more challenging to find the transmit beamforming to simultaneously direct significant power towards both PR and BD. Another observation from Fig. \ref{fig_5} is that the resulting EE pairs by the ``PT-rate-max design" lie in the interior of the achievable EE region. This implies that simply maximizing the communication rate is strictly energy-inefficient, which demonstrates the importance of considering EE metrics deliberately in SR systems. Similar observations are also made in Fig. \ref{fig_6}, which plots the outer boundaries of achievable EE regions and the resulting EE pairs with the ``PT-rate-max design" for two different circuit power levels $P_s=0 \text{mW}$ and $20 \text{mW}$, with $\theta=20^\circ$. It is also observed from Fig. \ref{fig_6} that as the circuit power consumption reduces, the achievable EE region enlarges, as expected.
\vspace{-2ex}
\begin{figure}[h]
  \centering
  \includegraphics[width = 0.45\textwidth]{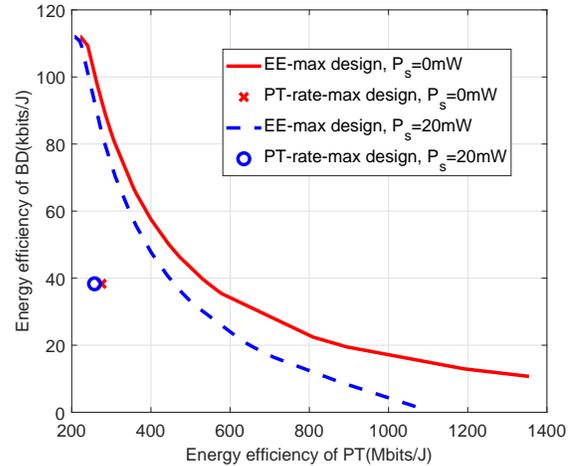}
  \vspace{-2ex}
  \caption{EE region of MISO SR with different $P_s$.}
  \vspace{-2ex}
  \label{fig_6}
\end{figure}

\section{Conclusion}

In this paper, we characterized the EE region of MISO SR systems, where the primary link shares the spectrum as well as the power with the backscattering link. We first derived the maximum individual EE of the PT and BD, respectively, and revealed that there exists a non-trivial EE trade-off between PT and BD. An optimization problem was then formulated by optimizing the transmit beamforming to find the Pareto boundary points of the EE region. An efficient SCA-based bisection algorithm was proposed to solve the formulated problem. Simulation results are provided to show the effectiveness of our proposed algorithm for EE region characterization of SR systems.

\appendix[Proof of Theorem 1]

According to the Woodbury matrix identity, the objective function ${EE}_{PT}{\left(p\right)}$ of problem (\ref{eq41}) can be written as
\begin{small}
\begin{equation}\label{eq11}
  EE_{PT}{\left(p\right)} = \frac{{B{{\log }_2}\Big( {1 + ( {{{\| {\hat {\bf{h}}} \|}^2} - \frac{{{{\left| {{{\hat {\bf{g}}}^H}\hat {\bf{h}}} \right|}^2}p}}{{1 + {{\left\| {\hat {\bf{g}}} \right\|}^2}p}}} )p} \Big)}}{{\mu p + P_s}}.
\end{equation}
\end{small}%

For convenience, denote $g\left( p \right)$ as
\begin{small}
\begin{equation}\label{eq24}
   g\left( p \right) =  EE_{PT}{\left(p\right)} = \frac{{B{{\log }_2}\left(1+ {f\left( p \right)} \right)}}{{\mu p + P_s}},
\end{equation}
\end{small}%
where
\begin{small}
\begin{equation}\label{eq23}
  f\left( p \right) =  \frac{{{{\| {\hat {\bf{h}}} \|}^2}p + \left( {{{\| {\hat {\bf{h}}} \|}^2}{{\left\| {\hat {\bf{g}}} \right\|}^2} - {{| {{{\hat {\bf{g}}}^H}\hat {\bf{h}}} |}^2}} \right){p^2}}}{{1 + {{\left\| {\hat {\bf{g}}} \right\|}^2}p}}.
\end{equation}
\end{small}%
Then the derivative of $ g\left( p \right) $ is \par
\begin{small}
\begin{equation}\label{eq25}
  g'\left( p \right) = \frac{{B{{\log }_2}e}}{{{{\left( {\mu p + P_s} \right)}^2}}}h\left( p \right),
\end{equation}
\end{small}%
where
\begin{small}
\begin{equation}\label{eq26}
  h\left( p \right) = {\frac{{\left( {\mu p + P_s} \right)f'\left( p \right)}}{1+{f\left( p \right)}} - \mu \ln \left(1+ {f\left( p \right)} \right)}.
\end{equation}
\end{small}%

It can be shown that $f\left( p \right)$ is a concave, monotonically increasing function and $f\left( p \right) >0 $, for $p \ge 0$, by checking its first- and second-order derivatives. Similarly, $h\left( p \right)$ is a monotonically decreasing function for $p\ge 0$.
Then, with $h\left( 0 \right) = {\left\| {\bf{h}} \right\|^2}P_s > 0$ and $h\left( { + \infty } \right) =  - \infty  < 0$, there exists a positive and unique zero point $p_0$ of the function $h \left( p \right)$ to satisfy
\begin{small}
\begin{equation}\label{eq27}
  h\left( p_0\right) = {\frac{{\left( {\mu p_0 + P_s} \right)f'\left( p_0 \right)}}{1+{f\left( p_0 \right)}} - \mu \ln \left(1+ {f\left( p_0 \right)} \right)} = 0.
\end{equation}
\end{small}%
Therefore, for $p \in \left[ {0,{p_0}} \right)$, $g'\left(p\right)> 0$, i.e., $g\left( p \right)$ is increasing, while for $ p \ge {p_0}$, $g'\left(p\right)\le 0$, i.e., $g\left( p \right)$ is decreasing. Since the feasible range for (\ref{eq41}) is $p\le P_{\max}$, the optimal solution is given by ${p^*} = \min \left\{ {{p_0},{P_{\max }}} \right\}$.

The proof is thus completed.

\section*{Acknowledgment}

This work was supported by the National Key R$\text{\&}$D Program of China with Grant number 2019YFB1803400.



\bibliographystyle{IEEEtran}
\bibliography{myreference}

\end{document}